\newtheorem{theorem}{Theorem}
\newtheorem{corollary}{Corollary}
\newtheorem{lemma}{Lemma}
\newtheorem{definition}{Definition}
\newcommand{\CS}{|\text{CS}\rangle}
\newcommand{\INS}{| I \rangle}
\newcommand{\ket}[1]{{\left\vert{#1}\right\rangle}}
\newcommand{\qw}[1][-1]{\ar @{-} [0,#1]}
\newcommand{\qwx}[1][-1]{\ar @{-} [#1,0]}
\newcommand{\cw}[1][-1]{\ar @{=} [0,#1]}
\newcommand{\gate}[1]{*+<.6em>{#1} \POS ="i","i"+UR;"i"+UL **\dir{-};"i"+DL **\dir{-};"i"+DR **\dir{-};"i"+UR **\dir{-},"i" \qw}
\newcommand{\meter}{*=<1.8em,1.4em>{\xy ="j","j"-<.778em,.322em>;{"j"+<.778em,-.322em> \ellipse ur,_{}},"j"-<0em,.4em>;p+<.5em,.9em> **\dir{-},"j"+<2.2em,2.2em>*{},"j"-<2.2em,2.2em>*{} \endxy} \POS ="i","i"+UR;"i"+UL **\dir{-};"i"+DL **\dir{-};"i"+DR **\dir{-};"i"+UR **\dir{-},"i" \qw}
\newcommand{\control}{*!<0em,.025em>-=-<.2em>{\bullet}}
\newcommand{\ctrl}[1]{\control \qwx[#1] \qw}
\newcommand{\push}[1]{*{#1}}
\newcommand{\rstick}[1]{*!L!<-.5em,0em>=<0em>{#1}}
\newcommand{\lstick}[1]{*!R!<.5em,0em>=<0em>{#1}}
\newcommand{\Qcircuit}{\xymatrix @*=<0em>}
\begin{document}
\title{Universality of quantum computation with cluster states and (X,Y)-plane measurements}
\author{Atul Mantri}
\affiliation{Singapore University of Technology and Design, 8 Somapah Road, Singapore 487372}
\author{Tommaso F. Demarie}
\affiliation{Singapore University of Technology and Design, 8 Somapah Road, Singapore 487372}
\author{Joseph F. Fitzsimons}
\affiliation{Singapore University of Technology and Design, 8 Somapah Road, Singapore 487372}
\affiliation{Centre for Quantum Technologies, National University of Singapore, Block S15, 3 Science Drive 2, Singapore 117543}
\begin{abstract}
Measurement-based quantum computing (MBQC) is a model of quantum computation where quantum information is coherently processed by means of projective measurements on highly entangled states. Following the introduction of MBQC, cluster states have been studied extensively both from the theoretical and experimental point of view. Indeed, the study of MBQC was catalysed by the realisation that cluster states are universal for MBQC with (X,Y)-plane and Z measurements. Here we examine the question of whether the requirement for Z measurements can be dropped, while maintaining universality. We answer this question in the affirmative by showing that universality is possible in this scenario.
\end{abstract}
\maketitle

\section{Introduction}
Cluster states~\cite{Nielsen2006} are highly entangled quantum states that play the role of central resources in measurement-based quantum computing (MBQC). While the canonical understanding of quantum computation relies on the imagery and concepts of quantum circuits~\cite{Nielsen2000}, MBQC recreates the full toolbox of wires and gates by means of local adaptive quantum measurements on said cluster states. This point of view is particularly appealing because it replaces the issue of coherently controlling quantum states with the less experimentally challenging action of creating an entangled resource at the initial stage of the computation. Indeed cluster states can be created efficiently in any system with a quantum Ising-type interaction (at very low temperatures) between two-state particles in a lattice configuration. In its original formulation~\cite{Raussendorf2001}, universality of MBQC was shown using a cluster state and single-qubit projective measurements: Precisely, this universality proof makes use of $(X,Y)$-plane measurements as well as $Z$-basis (computational) measurements used to remove redundant qubits from the cluster state. Such formulation of the MBQC universality proof is still considered today as \emph{the benchmark proof}, particularly because of its simple intuitive power. Nonetheless it seems fair to ask whether universality can be achieved limiting the measurements to a single plane of the Bloch sphere.

Beyond more fundamental reasons, the motivation to reduce the angle set to a single plane follows from the prevalent formulation of delegated quantum computing (DQC) protocols. Critically, the adaptive nature of MBQC proved to be of great importance in the development of secure DQC: The first universal and unconditionally secure blind quantum computing protocol presented in~\cite{Broadbent2009} is entirely based on MBQC and exploits the  interaction between a client and a server to protect the client's information. In this protocol, measurements are performed by the server and belong solely to the $(X,Y)$-plane. This succeeds because the resource state considered is a brickwork state, which can be prepared by performing an appropriate pattern of $Z$-measurements on a cluster state. More recently, there has been a plethora of work on blind DQC grounded on MBQC, see for example~\cite{Barz2012,Barz2013,Fitzsimons2013,Morimae2013MO,Mantri2016} and references therein (for a more general discussion about DQC we refer the readers to~\cite{Dunjko2014}).

The idea of secure DQC is motivated by very practical issues: One could safely anticipate that when quantum computers will be available they will be hosted by large institutions offering their services in a cloud fashion~\cite{IBM2016}. Blind protocols are there to allow for a client with limited quantum technologies to access the full-power of quantum computers while protecting the privacy of her information. In this letter we show that a 2-dimensional cluster state is universal for quantum computation with measurements restricted to the $(X,Y)$-plane. This result is general, novel to the best of our knowledge, and implies that every blind DQC protocol rooted on MBQC can use a cluster state as resource, with no fundamental necessity to introduce more particular states such as the brickwork state. 

Here we try to keep the formalism needed for our proof at a minimum. All the relevant concepts are introduced, but a certain level of familiarity with the ideas of measurement-based quantum computing is assumed.

\section{Definitions and Notations}
We start by defining the notation used throughout the text. The identity gate is $\hat{\mathbb{I}}$, we use the symbols $\hat{X}, \hat{Z}$ for the Pauli gates, $\hat{H}$ for the Hadamard gate and $\hat{R}_Z(\theta) = \text{exp}(-i \frac{\theta}{2} \hat{Z})$ for a generic $Z$-rotation by an angle $\theta$. The states $|0\rangle$ and $|1\rangle$ form the computational basis and they are eigenstates of $\hat{Z}$. The eigenstates of $\hat{X}$ are $| +\rangle$ and $|-\rangle$, with $| +\rangle = \hat{H} |0\rangle$ and $| -\rangle = \hat{H} |1\rangle$. One of the entangling gates for the $|\pm\rangle$ basis is the two-qubit controlled-$\hat{Z}$ operator, given by $\text{Ctrl-Z} = |0\rangle \langle0| \otimes \hat{\mathbb{I}} + |1\rangle \langle1| \otimes \hat{Z}$. We also define a two-qubit gate with weight $\alpha$ as $\hat{R}_{ZX}(\alpha) = \exp(- i \frac{\alpha}{2} \hat{Z} \otimes \hat{X})$. Note that if a $Z$-rotation (with angle $\theta$) is applied to a $|+\rangle$ state (or to its orthogonal state $|-\rangle$) one gets, \[\hat{R}_Z(\theta) |\pm\rangle = |\pm_{\theta}\rangle := \frac{1}{\sqrt{2}}(|0\rangle \pm \exp(i \theta)|1\rangle).\] Analogously, a single-qubit projective $(X,Y)$-plane measurement with angle $\theta$ is equivalent to a measurement in the basis $ \{ |+_{\theta}\rangle, |-_{\theta}\rangle\}$. We now give two definitions for cluster states:

\begin{definition}(Cluster State)
A cluster state $|\text{CS}_{n \times m}\rangle$, is an entangled state of $n \times m$ qubits constructed as follows:
\begin{enumerate}
\item Prepare all the qubits in the state $|+\rangle$ and assign to each qubit a unique index (i,j), i being a row ($i \in [n]$) and j being a column ($j \in [m]$).
\item For each row ($1 \leq i \leq n$),  apply the operator $\text{Ctrl-Z}$ on qubits (i,j) and (i,j+1) where $1 \leq j \leq m-1$.  
\item For each column ($1 \leq j \leq m$), apply the operator $\text{Ctrl-Z}$ on qubits (i,j) and (i+1,j) where $1 \leq i \leq n-1$.  
\end{enumerate}
\end{definition}

\begin{definition}(Open-ended Cluster State)
A cluster state $|\text{OCS}_{n \times m}\rangle$, is an entangled state of $n \times m$ qubits constructed as follows :
\begin{enumerate}
\item Prepare all the qubits in the state $|+\rangle$ and assign to each qubit a unique index (i,j), i being a row ($i \in [n]$) and j being a column ($j \in [m]$).
\item For each row ($1 \leq i \leq n$),  apply the operator $\text{Ctrl-Z}$ on qubits (i,j) and (i,j+1) where $1 \leq j \leq m-1$.  
\item For each column ($1 \leq j \leq m-1$), apply the operator $\text{Ctrl-Z}$ on qubits (i,j) and (i+1,j) where $1 \leq i \leq n-1$.
\end{enumerate}
\end{definition}

\section{MBQC - Preliminaries}
The workhorse of MBQC is the well-known concept of one-bit teleportation~\cite{gottesman1999quantum,Gottesman1999, Zhou2000}, which is also particularly helpful to visualise our universality proof. We will present this idea using two qubits initialised in the following state: $ \text{Ctrl-Z} (|\psi\rangle \otimes |+\rangle)$. Since $\text{Ctrl-Z}$ is a symmetric operator any of the two qubits can be chosen as control (or target) qubit. This is conceptually equivalent to preparing a cluster state in MBQC where the first qubit has been replaced by a generic state $|\psi\rangle$. The teleportation happens when the first qubit ($|\psi\rangle$) is measured in the $\{ |+\rangle, |-\rangle \}$ basis: Circuit-wise, this is represented by applying first an Hadamard gate and then measuring in the computational basis as shown in Figure \ref{fig:1bit}. Importantly, after the measurement the state of the second qubit is equal to $\hat{X}^m \hat{H} |\psi\rangle$, where $\hat{X}^m$ is a Pauli correction induced by the measurement outcome $m$.

\begin{figure}
\[
\Qcircuit @C=1em @R=1em{
\lstick{\ket{\psi}} & \ctrl{2} & \gate{H} & \meter & \rstick{m} \cw \\
&  & & & & & & &  \\
\lstick{\ket{+}}  & \control \qw & \qw & \qw & & \hat{X}^{m}\hat{H}\ket{\psi} \\
}
\]
\caption{The quantum circuit for the one-bit teleportation scheme.}
\label{fig:1bit}
\end{figure}
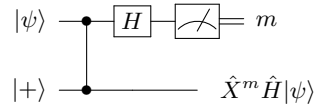

A generalised one-bit teleportation circuit corresponds to performing a measurement in the $(X,Y)$-plane instead of the $\{|+\rangle, |-\rangle\}$ basis measurement. This circuit is shown in Figure \ref{fig:1bitg}. On the left hand side is shown the circuit corresponding to a measurement of angle $\theta$ in the $(X,Y)$-plane on the first qubit of the two-qubit state.  
Because any generic $\hat{R}_Z(\theta)$ rotation commutes with the $\text{Ctrl-Z}$ gate, one can easily transform the circuit on the left to an instance of the one-bit teleportation circuit by simply updating the initial state. Note that the resulting state at the end of the circuit inherits the $Z$-rotation introduced by the measurement.

\begin{figure}
\[
\Qcircuit @C=1em @R=1em{
\lstick{\ket{\psi}} & \ctrl{2} & \gate{HR_{Z}(\theta)} & \meter & \rstick{m} \cw \\
&  & & & & & & &  & \push{\rule{.3em}{0em}\equiv\rule{.3em}{0em}}  \\
\lstick{\ket{+}}  & \control \qw & \qw & \qw & & & \hat{X}^{m}\hat{H}\hat{R}_{Z}(\theta)\ket{\psi}\\
&  & & & & & & &  &  \\
\lstick{\hat{R}_{Z}(\theta)\ket{\psi}} & \ctrl{2} & \gate{H} & \meter & \rstick{m} \cw \\
&  & & & & & & &  &  \\
\lstick{\ket{+}}  & \control \qw & \qw  &\qw & & & \hat{X}^{m}\hat{H}\hat{R}_{Z}(\theta)\ket{\psi}\\
}
\]
\caption{Generalised one-bit teleportation circuit. LHS of the equivalence sign: measuring the first qubit with the $\theta$ angle in the (X,Y) plane, and RHS of the equivalence sign: applying a $Z$-rotation on the first qubit and measuring it in the X-basis. The two circuits are equivalent.}
\label{fig:1bitg}
\end{figure}
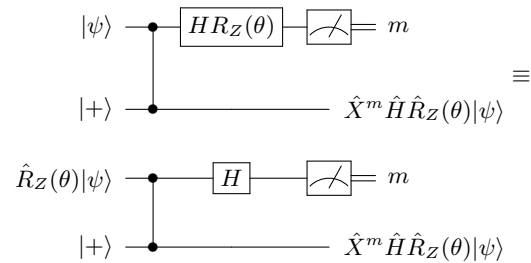

The customary understanding of MBQC is formulated in terms of a measurement \emph{pattern}: One defines an input and output set on the resource state such that the measurements transform the corresponding input state into the desired outcome, identified by the qubits of the output set. Intuitively, the quantum information is transformed by the same process that governs the generalised one-bit teleportation scheme. For the cluster state (open-ended or not), if the cardinality of the input set is $|I|$, then the input state of the computation corresponds to $|+\rangle^{\otimes |I|}$.  Note that this does not have to be the case in general, as shown for example in Figure~\ref{fig:1bit} and Figure~\ref{fig:1bitg}. Resource states can be concatenated, each representing the subroutine of a larger computation. Then, as dictated by the execution order, the output state of a previous resource state would correspond to the input state of the following one. Importantly, changing the input state of the MBQC resource state by replacing the $|+\rangle^{\otimes |I|}$ with a generic $|I|$-qubit state $| I \rangle$ does not affect the transformation induced by the measurements~\cite{Danos2007}. To define this concept properly, we give two modified definitions for cluster states:
\begin{definition}(Cluster State with generic input state)
A quantum state $|\text{inCS}_{n \times m}\rangle$, is a cluster state of $n \times m$ qubits with a well-defined input set of cardinality $|I|$. Additionally, the $|+\rangle^{\otimes |I|}$ qubits identified by the location of the input set are replaced by a generic $| I \rangle$ state before the entangling operations are performed. 
\end{definition}
\begin{definition}(Open-ended Cluster State with generic input state)
A quantum state $|\text{inOCS}_{n \times m}\rangle$, is an open ended cluster state of $n \times m$ qubits with a well-defined input set of cardinality $|I|$. Additionally, the $|+\rangle^{\otimes |I|}$ qubits identified by the location of the input set are replaced by a generic $| I \rangle$ state before the entangling operations are performed. 
\end{definition}

It is important to note that in MBQC measurements are in general adaptive in nature because of their inherent randomness, therefore future measurements might depend on previous outcomes. This is not problematic since suitable local corrections can be applied to move the computation back to the positive branch, i.e. where all the measurement outcomes are 0. We will therefore ignore the measurement dependencies in the rest of the manuscript and only deal with the unitary operations given by the positive branch of the measurement pattern~\cite{Danos2006,Danos2007}.

\section{Universality of cluster states with (X,Y)-plane measurements }
In this section we present a proof of universality for a cluster state with measurements constrained to the $(X,Y)$-plane. We start by proving a number of lemmas that will make the exposition of the main theorem easier. We start by introducing the notion of a universal gate set~\cite{Nielsen2000,boykin2000new}. The aim of the later proof is to show that it is possible to reproduce such a gate set by appropriate $(X,Y)$-plane measurements. 
\begin{lemma}
\label{lem:gateset}
Consider the gate set given by $\{ \exp(- i \frac{\theta}{2} \hat{Z}_{i} ), \exp(- i \frac{\theta}{2} \hat{X}_{i}), \exp(- i \frac{\theta}{2} \hat{Z}_{i} \otimes \hat{X}_{i+1})  \}$, with $i$ and $i+1$ adjacent qubits: It forms a universal set of quantum gates for quantum computing. 
\end{lemma}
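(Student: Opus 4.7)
The plan is to reduce the claim to the standard universality of the gate set $\{\hat{R}_Z(\theta),\hat{R}_X(\theta),\text{CNOT}\}$. The set under consideration contains the first two directly, and the task is then to extract a canonical two-qubit entangler (for concreteness, $\text{Ctrl-Z}$ or CNOT) from the third generator together with the single-qubit rotations built out of the first two.

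First, I would establish single-qubit universality. The generators $\exp(-i\tfrac{\theta}{2}\hat{Z})$ and $\exp(-i\tfrac{\theta}{2}\hat{X})$ are exactly $\hat{R}_Z(\theta)$ and $\hat{R}_X(\theta)$ for arbitrary real $\theta$. By the Euler-angle decomposition of $SU(2)$, every single-qubit unitary can be written (up to a global phase) in the form $\hat{R}_Z(\alpha)\hat{R}_X(\beta)\hat{R}_Z(\gamma)$, so all of $SU(2)$ is accessible on each qubit individually; in particular $\hat{H}$ is in our repertoire.

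Next, I would convert the two-qubit generator into a $\text{Ctrl-Z}$. Since $\hat{H}\hat{X}\hat{H}=\hat{Z}$, conjugating $\hat{R}_{ZX}(\alpha)$ by $(\hat{\mathbb{I}}\otimes\hat{H})$ produces $\hat{R}_{ZZ}(\alpha)=\exp(-i\tfrac{\alpha}{2}\hat{Z}\otimes\hat{Z})$. Choosing $\alpha=-\pi/2$ and dressing the result with suitable single-qubit $\hat{Z}$-rotations and an overall phase yields $\text{Ctrl-Z}$ via an identity of the form
\[
\text{Ctrl-Z} \;=\; e^{i\pi/4}\bigl(\hat{R}_Z(\pi/2)\otimes\hat{R}_Z(\pi/2)\bigr)\,\hat{R}_{ZZ}(-\pi/2),
\]
and a final Hadamard on the target qubit converts $\text{Ctrl-Z}$ to CNOT. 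Combined with step one, this places the textbook universal set $\{\hat{R}_Z(\theta),\hat{R}_X(\theta),\text{CNOT}\}$ inside the group generated by our three gates.

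The main obstacle is really just the bookkeeping behind the last step — checking that the appropriate local dressings turn $\hat{R}_{ZZ}(\pm\pi/2)$ into $\text{Ctrl-Z}$ up to a global phase — but this is a routine finite-dimensional calculation. If one prefers to avoid the explicit construction, the same conclusion follows abstractly from the Brylinski--Brylinski theorem: $\hat{R}_{ZX}(\alpha)$ fails to factor as a tensor product of single-qubit gates for any $\alpha\notin\pi\mathbb{Z}$, and any such entangling two-qubit gate, together with arbitrary single-qubit unitaries, generates the full multi-qubit unitary group.
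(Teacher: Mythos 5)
Your argument follows essentially the same route as the paper's proof: obtain all of $SU(2)$ on each qubit from $\hat{R}_Z(\theta)$ and $\hat{R}_X(\theta)$, convert the two-qubit generator into a canonical entangler, and invoke the standard universality of single-qubit unitaries plus CNOT. You are in fact more explicit than the paper at the middle step: the paper merely asserts that $\hat{R}_{ZX}(\theta)$ together with the local rotations suffices to build a CNOT, whereas you verify it through $(\hat{\mathbb{I}}\otimes\hat{H})\hat{R}_{ZX}(\alpha)(\hat{\mathbb{I}}\otimes\hat{H})=\hat{R}_{ZZ}(\alpha)$ and the identity $\text{Ctrl-Z}=e^{i\pi/4}\bigl(\hat{R}_Z(\pi/2)\otimes\hat{R}_Z(\pi/2)\bigr)\hat{R}_{ZZ}(-\pi/2)$, which is correct (it checks out on the computational basis), and the Brylinski--Brylinski theorem is a legitimate abstract shortcut. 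The one step you omit, and which the paper states explicitly, is the passage from nearest-neighbour CNOTs to CNOTs between arbitrary pairs: the generators in the lemma couple only adjacent qubits $i$ and $i+1$, whereas the textbook universality statement you cite assumes CNOT is available on every pair of qubits. The fix is exactly the paper's remark --- three nearest-neighbour CNOTs implement a SWAP, so logical qubits can be permuted along the line and any required CNOT can be brought to adjacent positions --- and the same observation is needed if you argue via Brylinski--Brylinski, since that theorem likewise presumes the entangling gate can act on arbitrary pairs rather than only along a fixed connected coupling graph. With that one sentence added, your proof is complete and matches the paper's in substance, while supplying the explicit two-qubit construction the paper leaves implicit.
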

\begin{proof}
The gates $\hat{R}_{Z}(\theta), \hat{R}_{X}(\theta)$ generate SU(2), and hence any single qubit operation can be implemented via a sequence of such gates. Together with the $\hat{R}_{ZX}(\theta)$ gate, these suffice to implement a CNOT gate between nearest neighbours. As a pair of qubits can be swapped through a sequence of three CNOT gates, the logical qubits can be permuted arbitrarily, allowing for CNOT gates to be implemented between arbitrary pairs of qubits. Thus the gate set considered here is equivalent to local unitaries together with CNOT gates, which has long been known to be universal~\cite{barenco1995elementary}.
\end{proof}

In the rest of the paper, we consider a $n \times m$ open-ended cluster state with generic input state $|\text{inOCS}_{n\times m}\rangle$. There can be two cases either $m > n$ or $m \leq n$. We only consider the former case with measurements executed column-by-column from left to right. We fix the first column to be the input state $\INS$ and the last ($m$-th) column to be the output state. All the $m-1$ columns preceding the output set are measured, and their qubits called operational qubits. An example for clarity is shown in Figure~\ref{fig:inCS}.

\begin{figure}
\includegraphics[width=150pt]{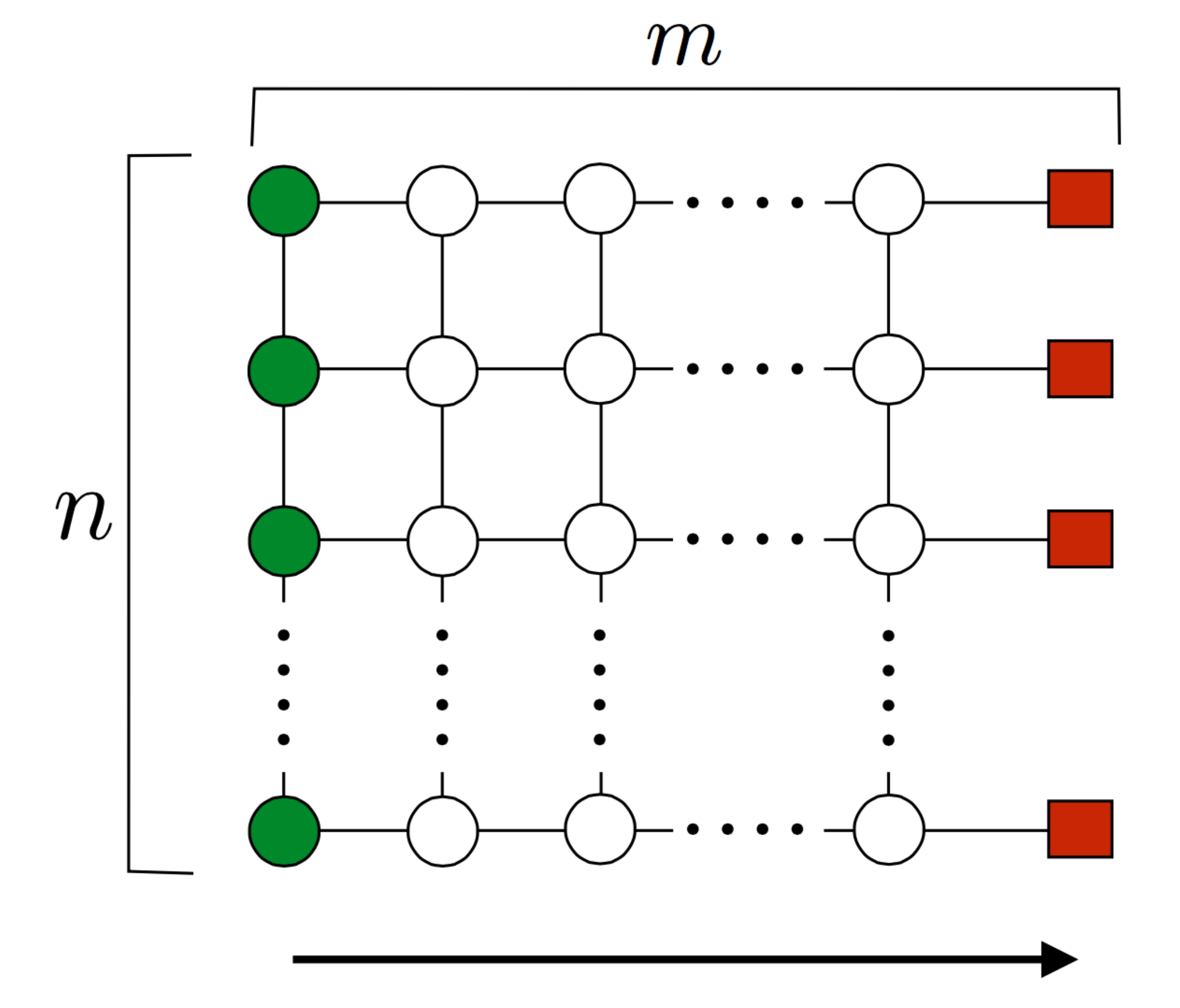}
\caption{A generic resource state $|\text{inOCS}_{n \times m}\rangle$. In green is shown the input set, in red the output set, while the arrow below indicates the sequential order of the measurements.}
\label{fig:inCS}
\end{figure}

\begin{lemma}
\label{lem:CS1+xmeas}
A resource state $|\text{inOCS}_{n\times m}\rangle$ together with $(X,Y)$-plane measurements $\{|+_\theta \rangle \langle +_\theta|\}$ along the $X$-axis, i.e. $\theta =0$, can be used to perform the unitary $\hat{U}$ on any input state $|I\rangle$, where $U$ is given by:
\begin{align}
\hat{U} = \prod_{j=1}^{m-1} \left( \otimes_{i=1}^{n}\hat{H}_{i} \right)_{j} \left( \otimes_{i=1}^{n-1} \text{Ctrl-Z}_{i,i+1} \right)_{j} \, .
\end{align}
\end{lemma}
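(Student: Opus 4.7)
The proof proceeds by induction on $m$, the number of columns. The base case $m=1$ is immediate: there are no operational qubits to measure, the product defining $\hat{U}$ is empty so $\hat{U}=\hat{\mathbb{I}}$, and the output state on the single column is $|I\rangle$. For the inductive step I would first exploit the fact that all $\text{Ctrl-Z}$ operators pairwise commute to rearrange the preparation circuit. Writing $\hat{E}^{\text{v}}_j=\prod_{i=1}^{n-1}\text{Ctrl-Z}_{(i,j),(i+1,j)}$ for the vertical ladder inside column~$j$ and $\hat{E}^{\text{h}}_{j,j+1}=\prod_{i=1}^{n}\text{Ctrl-Z}_{(i,j),(i,j+1)}$ for the horizontal links between columns~$j$ and $j+1$, one can factor
\[
|\text{inOCS}_{n\times m}\rangle=\hat{E}^{\text{h}}_{1,2}\,\hat{E}^{\text{v}}_1\,\bigl(|I\rangle_{\text{col}\,1}\otimes|\chi\rangle\bigr),
\]
where $|\chi\rangle$ is itself an $|\text{inOCS}_{n\times(m-1)}\rangle$ on columns $2,\ldots,m$ with input $|+\rangle^{\otimes n}$ placed on column~$2$. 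Since $\hat{E}^{\text{v}}_1$ is supported only on column~$1$, it can be absorbed into the input, so the effective state entering the horizontal teleportation step is $|I'\rangle:=\hat{E}^{\text{v}}_1|I\rangle$.

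The column-$1$ configuration is now exactly $n$ parallel copies of the one-bit teleportation gadget of Figure~\ref{fig:1bit}: row~$i$ is a qubit carrying the $i$-th factor of $|I'\rangle$, $\text{Ctrl-Z}$-coupled to a $|+\rangle$ on $(i,2)$ and measured in the $X$-basis ($\theta=0$). On the positive branch each row implements $|\psi\rangle\mapsto\hat{H}|\psi\rangle$ and teleports the result onto column~$2$. The one subtlety is that the column-$2$ qubits are already entangled with columns $3,\ldots,m$; however, every $\text{Ctrl-Z}$ in $|\chi\rangle$ is diagonal in the computational basis and therefore commutes with any Pauli $\hat{Z}$ by-product of the measurement, so the teleportation analysis carries through unchanged and installs $\hat{H}^{\otimes n}\hat{E}^{\text{v}}_1|I\rangle$ on column~$2$ while leaving the remainder of $|\chi\rangle$ intact. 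The residual state is thus $|\text{inOCS}_{n\times(m-1)}\rangle$ with input $\hat{H}^{\otimes n}\hat{E}^{\text{v}}_1|I\rangle$, and applying the inductive hypothesis supplies a further $\bigl(\hat{H}^{\otimes n}\hat{E}^{\text{v}}\bigr)^{m-2}$ on this state, which assembles into $\bigl(\hat{H}^{\otimes n}\hat{E}^{\text{v}}\bigr)^{m-1}|I\rangle=\hat{U}|I\rangle$ on column~$m$, as required.

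The main obstacle I anticipate is bookkeeping rather than any single hard computation: one must verify carefully that the residual state obtained after a round of column measurements really is the next $|\text{inOCS}\rangle$ with the claimed input, and this relies on being able to push Pauli $\hat{Z}$'s freely through the remaining $\text{Ctrl-Z}$ network. The open-ended character of the resource is essential here: the absence of a vertical ladder in column~$m$ guarantees that the output qubits are not subjected to any extra entangling operation beyond what is already captured by the iteration $\bigl(\hat{H}^{\otimes n}\hat{E}^{\text{v}}\bigr)^{m-1}$, so the state read from column~$m$ coincides exactly with $\hat{U}|I\rangle$.
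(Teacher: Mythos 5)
Your proof is correct and follows essentially the same route as the paper, which simply invokes the one-bit teleportation scheme applied to all operational qubits measured in the $X$-basis; your column-by-column induction is just a more explicit formalization of that argument, with the commuting-$\text{Ctrl-Z}$ bookkeeping spelled out rather than left implicit.
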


\begin{proof}
From the one-bit teleportation scheme we can see that by measuring in the $\{ |+\rangle, |-\rangle \}$ basis all the operational qubits, the output of the computation is equal to $\hat{U}$, where $\hat{U}$ corresponds to the unitary implemented by the positive branch of the MBQC pattern. 
\end{proof}
 
Let us now define:
\begin{equation}
\hat{C}_n :=  \left( \otimes_{i=1}^{_{n}}\hat{H}_{i} \right)  \left( \otimes_{i=1}^{_{n-1}} \text{Ctrl-Z}_{i,i+1} \right) \, .
\end{equation}
Note that for a cluster state $|\text{inOCS}_{n \times m}\rangle$ the unitary operation $\hat{U}$ consists of $(m-1)$ repetitions of $\hat{C}_n $ applied on the input state $|I\rangle$. A simple example of such MBQC pattern is shown in Figure~\ref{fig:cluster} for the $2\times4$ cluster state $|\text{OCS}_{2 \times 4}\rangle$ with standard input state $|I\rangle = |++\rangle$.

To show that it is possible to reproduce the universal gate set from Lemma~\ref{lem:gateset}, we study the effects of performing (X,Y)-plane measurements with generic angle $\theta$ in different positions of the $|\text{inOCS}\rangle_{n\times (n+2)}$, for $m = n+2$. This choice of $m$ is justified by reasons of symmetry and does not affect the generality of the proof, which can be rewritten for generic values of $m$ at the cost of a less clear interpretation of the findings.

\begin{lemma}
\label{lem:zrot}
Consider a $n \times (n+2)$ open-ended cluster state $|\text{inOCS}_{n\times(n+2)}\rangle$, then the following statements are true:
\begin{enumerate}
\item When the $i$-th qubit of the first column is measured with a generic angle $\theta$ on the (X,Y)-plane, and all the other operational qubits are measured along the $X$-basis, the MBQC pattern implements a $Z$-rotation, $\hat{R}_{Z}(\theta)$, on the $(n+1-i)$-th qubit of the input state $|I\rangle$.
\item When the $i$-th qubit of the $(n+1)$-th column is measured with a generic angle $\theta$ on the (X,Y)-plane, and all the other operational qubits are measured along the $X$-basis, the MBQC pattern implements a $X$-rotation, $\hat{R}_{X}(\theta)$, on the  $(n+1-i)$-th qubit of the input state $|I\rangle$.
\item When the $i$-th qubit of the $p$-th column with $i=1$ and $1 < p < n+1$ is measured with a generic angle $\theta$ on the (X,Y)-plane, and all the other operational qubits are measured along the $X$-basis, the MBQC pattern implements an entangling gate, $\hat{R}_{ZX}(\theta)$, on the $n-p+1$ and $n-p+2$ qubits of the input state $|I\rangle$ for $i=1$. Analogously, for $i=n$ the same MBQC pattern implements the entangling gate $\hat{R}_{ZX}(\theta)$ on the $p$ and $p-1$ input qubits.
\end{enumerate}
\end{lemma}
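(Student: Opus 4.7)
The common strategy for all three statements is to combine the generalised one-bit teleportation of Figure~\ref{fig:1bitg} with the commutativity $[\hat{R}_Z(\theta),\text{Ctrl-Z}]=0$. By the former, measuring the operational qubit at position $(i,p)$ at angle $\theta$, with all other operational qubits measured along $\hat{X}$, is equivalent to applying an $\hat{R}_Z(\theta)$ to that qubit before a pure $\hat{X}$-basis pattern; by the latter this rotation can be slid through every entangling gate to the moment immediately before column $p$ is measured. Combining this with Lemma~\ref{lem:CS1+xmeas}, the pattern implements on $\ket{I}$ the unitary
\[
\hat{C}_n^{\,m-p}\,\hat{R}_Z(\theta)_{i}\,\hat{C}_n^{\,p-1},\qquad m=n+2.
\]
It therefore suffices to compute how conjugation by powers of $\hat{C}_n$ transports a $\hat{Z}$-rotation.

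The key calculation is a Heisenberg-picture analysis of $\hat{C}_n=\hat{H}^{\otimes n}\prod_{j}\text{Ctrl-Z}_{j,j+1}$. Using the base identities $\hat{C}_n\,\hat{Z}_j\,\hat{C}_n^\dagger=\hat{X}_j$ and $\hat{C}_n\,\hat{X}_j\,\hat{C}_n^\dagger=\hat{Z}_j\,\hat{X}_{j-1}\,\hat{X}_{j+1}$ (with boundary convention $\hat{X}_0=\hat{X}_{n+1}=\hat{\mathbb{I}}$), one proves by induction on $k$ that
\[
\hat{C}_n^{\,k}\,\hat{Z}_1\,\hat{C}_n^{-k}=\begin{cases}\hat{X}_1, & k=1,\\ \hat{Z}_{k-1}\hat{X}_k, & 2\le k\le n,\\ \hat{Z}_n, & k=n+1,\end{cases}
\]
together with mirror-image statements starting from $\hat{Z}_n$ and analogous formulas for an $\hat{X}_1$ initial operator. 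The boundary collapse at $k=n+1$ (where $\hat{X}_{n+1}$ is absent) yields the reflection $i\leftrightarrow n+1-i$ encoded in $\hat{U}=\hat{C}_n^{\,n+1}$.

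Substituting $(i,p)$ into the displayed unitary and applying this conjugation rule then yields each statement. For Part~1 ($p=1$, so $k=m-p=n+1$) the rotation becomes $\hat{R}_Z(\theta)_{n+1-i}\,\hat{U}$, a $\hat{Z}$-rotation on the $(n+1-i)$-th input qubit. For Part~2 ($p=n+1$, $k=1$) the single Hadamard layer converts the rotation to $\hat{R}_X(\theta)_i\,\hat{U}$, which the analogous reflection for $\hat{X}$ rewrites as $\hat{U}\,\hat{R}_X(\theta)_{n+1-i}$. For Part~3 with $1<p<n+1$ and $i=1$, $k=n+2-p$ lies in the middle branch, so the rotation becomes $\hat{R}_{ZX}(\theta)_{n-p+1,\,n-p+2}\,\hat{U}$, exactly the claimed entangling gate; the $i=n$ case follows from the mirror identity by symmetry.

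The main obstacle is the inductive step underlying the central display: at each $k$, the $\hat{X}_{k-1}$ byproduct produced by conjugating $\hat{X}_k$ must cancel the $\hat{X}_{k-1}$ produced by conjugating $\hat{Z}_{k-1}$, leaving precisely $\hat{Z}_k\hat{X}_{k+1}$ so the two-body support advances by exactly one site. Getting the $k=n$ boundary right — where $\hat{X}_{n+1}$ is absent and the support contracts to the single $\hat{Z}_n$ — is what ultimately produces both the $(n+1-i)$ labelling throughout Lemma~\ref{lem:zrot} and the entangling character of the $p$-dependent gates in Part~3.
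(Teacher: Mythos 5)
Your strategy is the same as the paper's: rewrite the angle-$\theta$ measurement as $\hat{R}_Z(\theta)$ followed by an $X$-measurement, use Lemma~\ref{lem:CS1+xmeas} to reduce the pattern to $\hat{C}_n^{\,n-p+2}\hat{R}_{Z_i}(\theta)\hat{C}_n^{\,p-1}$, and then transport the rotation through powers of the Clifford $\hat{C}_n$ in the Heisenberg picture. Your boundary-site recursion $\hat{C}_n^{\,k}\hat{Z}_1\hat{C}_n^{-k}=\hat{Z}_{k-1}\hat{X}_k$ for $2\le k\le n$, with collapse to $\hat{Z}_n$ at $k=n+1$, is correct (and is in fact cleaner than the paper's own single-step commutation displays, which contain a spurious factor), and it fully proves statement~3, since there only the seeds $\hat{Z}_1$ and $\hat{Z}_n$ are needed.

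The gap is in statements~1 and~2, which are asserted for \emph{every} $i\in[n]$: your induction controls only operators seeded at the ends of the chain, where the two-body ``domain-wall'' form $\hat{Z}_{k-1}\hat{X}_k$ survives precisely because one side of the light cone is cut off by the boundary. For an interior seed the conjugated operator spreads in both directions, e.g. $\hat{C}_n^{\,3}\hat{Z}_i\hat{C}_n^{-3}=\hat{Z}_{i-1}\hat{Z}_{i+1}\hat{X}_{i-2}\hat{X}_i\hat{X}_{i+2}$ for $2<i<n-1$, and showing that after exactly $n+1$ steps this string recollapses to the single Pauli $\hat{Z}_{n+1-i}$ (and $\hat{X}_i\mapsto\hat{X}_{n+1-i}$) is a genuinely stronger statement than your boundary computation; it does not ``follow from the boundary collapse at $k=n+1$'' as you claim. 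This is exactly the mirror-inversion property $\hat{C}_n^{\,n+1}\hat{P}_i\hat{C}_n^{-(n+1)}=\hat{P}_{n+1-i}$, $\hat{P}\in\{\hat{X},\hat{Z}\}$, which the paper does not prove from scratch either but imports from the cited mirror-symmetry results of Raussendorf and of Fitzsimons and Twamley. To close your argument, either invoke those results for general $i$, or extend the Heisenberg analysis to interior seeds (tracking the double-sided light cone and its reflections off both walls); as written, your proof establishes statements~1 and~2 only for $i\in\{1,n\}$.
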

\begin{proof}
To prove the lemma we look at how the corresponding quantum circuits change with the position on the state $|\text{inOCS}_{n\times(n+2)}\rangle$ of a single $(i,j)$-th qubit measured with an angle $\theta$. As stated above, all the other operational qubits are measured along the $X$-basis. This change of measurement basis can be equivalently written as a $Z$-rotation with angle $\theta$, i.e. $\hat{R}_{Z}(\theta) = \exp(-i \frac{\theta}{2} \hat{Z})$ on the $(i,j)$-qubit before the measurement. 

Using Lemma \ref{lem:CS1+xmeas}, we know that a measurement of all the qubits of one layer of $|\text{inOCS}_{n\times(n+2)}\rangle$ in the $X$-basis implements the $\hat{C}_n$ operator. Most importantly, since the $\hat{C}_n $ operator belongs to the Clifford group~\cite{Nielsen2000} it is easy to study how a single $Z$-rotation propagates through the circuit by using the canonical commutation relations of the Pauli matrices.

We define some helpful properties of the $\hat{C}_n$ operator. Inspired by models of quantum computation that exhibit particular mirror symmetries~\cite{Raussendorf2005,Fitzsimons2006}, we note that a repetition of $(n+1)$ $\hat{C}_n$ operators acts as a global mirror operation on the initial $n$-qubit state. Precisely, the following relations hold : $\hat{C}_n^{n+1}\hat{Z}_{i}|I\rangle = \hat{Z}_{\bar{i}}|I\rangle$ where the mirror qubit $\bar{i} = n+1-i$. Similarly, $\hat{C}_n^{n+1}\hat{X}_{i}|I\rangle = \hat{X}_{\bar{i}}|I\rangle$. More generally, the repetition of the $\hat{C}_n $ gate is used as a generalised swap gate as shown in~\cite{Fitzsimons2006,Raussendorf2005}. 

As always, examples are helpful to support the mathematical intuition: In Figure~\ref{fig:cluster}, we show the effects of a repetition of $\hat{C}_2$ gates on the familiar $|\text{OCS}_{2 \times 4} \rangle$ state. 

Using the mirror symmetry relations shown above, it is easy to see that a cluster state $|\text{inOCS}_{n \times (n+2)}\rangle$ whose $(i,1)$-th qubit is measured in the $(X,Y)$-plane with some angle $\theta$, while the remaining qubits are measured along the $X$-axis, implements a $\hat{R}_{Z_{n+1-i}}(\theta)|I\rangle$. Equally, measuring the $(i,n+1)$-th qubit in the $(X,Y)$-plane with some angle $\theta$ implements a $\hat{R}_{X_{n+1-i}}(\theta)|I\rangle$. This proves the first two statements of the lemma. 

\begin{figure*}
\includegraphics[width=\textwidth]{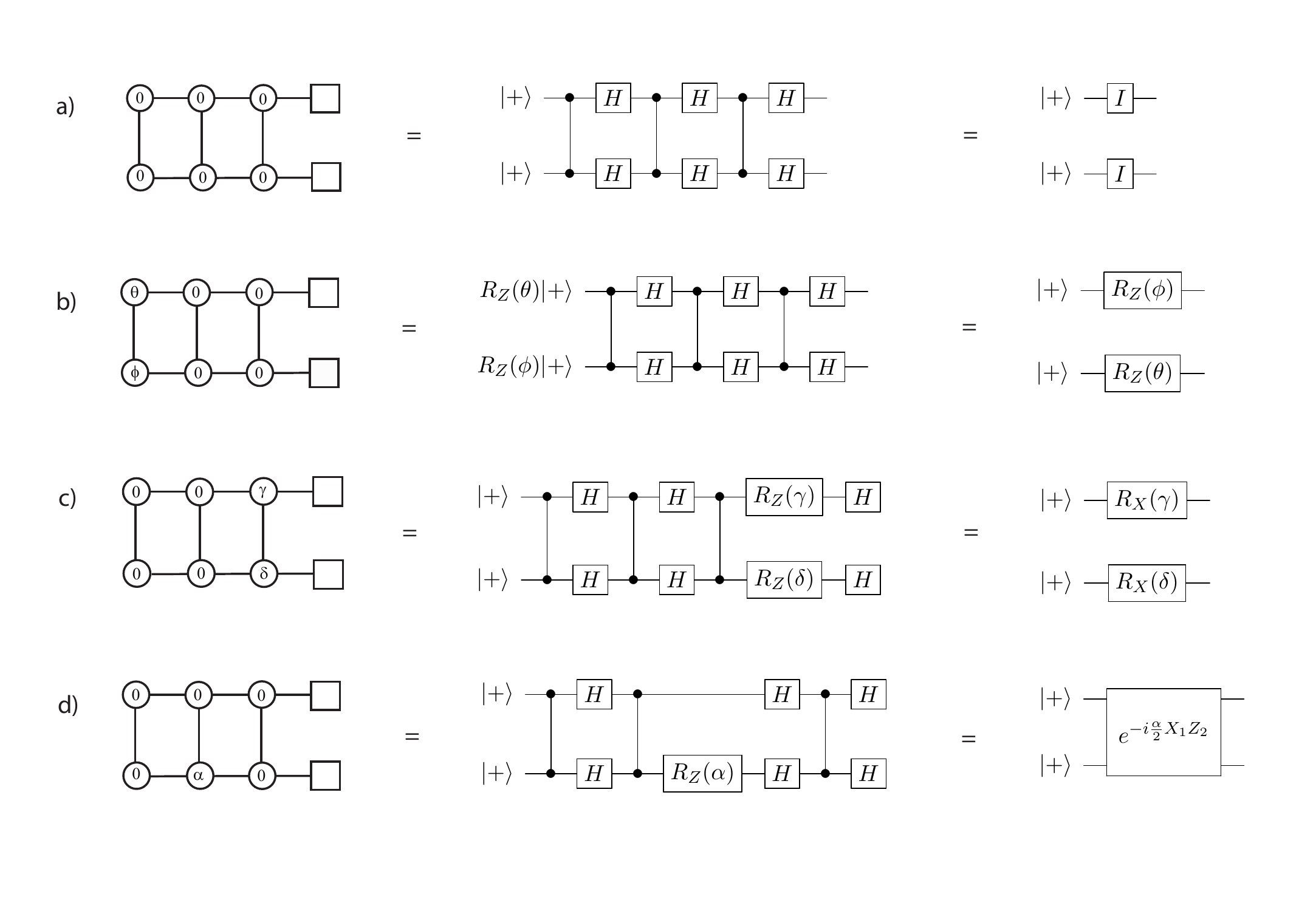}
\caption{Implementation of a) Identity gate, b) single-qubit Z rotation, c) single-qubit X rotation, and d) nearest neighbour entangling gate using $|\text{OCS}_{2 \times 4}\rangle$ and (X,Y)-plane measurements.  In these figures we use the convention that circles represent operational qubits, i.e. qubits that are measured during the computation, and squares represent output qubits, i.e. qubits left unmeasured at the end of the computation.}
\label{fig:cluster}
\end{figure*}

To prove the final statement, we analyse the situation when a rotation $\hat{Z}_{i}$ (or generally a $\hat{R}_{Z_{i}}$) is applied in the middle of a sequence of $\hat{C}_n$ operators. Explicitly, interposing an $(X,Y)$-measurement by an angle $\theta$ on the $(i,p)$-th qubit of a state $|\text{inOCS}_{n \times (n+2)}\rangle$ entirely measured along the $X$-basis implements $\hat{C}_n^{n-p+2}\hat{R}_{Z_{i}}(\theta)\hat{C}_n^{p-1}|I\rangle$. For a 2-qubit entangling gate it suffices to consider $i=1,n$.  We have already considered the case when $p = 1$ and $p = n+1$, and to derive a relation for any other $p$, we will use the following commutation relations and circuit identities (to simplify the notation we are not writing the identity gates): 

\begin{enumerate}
\item $\text{Ctrl-Z}_{(i,j)}\hat{Z}_{i} = \hat{Z}_{i}\text{Ctrl-Z}_{(i,j)}$ 
\item $\text{Ctrl-Z}_{(i,j)}\hat{X}_{i} = \hat{X}_{i}\hat{Z}_{j}\text{Ctrl-Z}_{(i,j)}$ 
\item $\text{Ctrl-Z}_{(i,j)}\hat{R}_{Z_{i}}(\theta) = \hat{R}_{Z_{i}}(\theta)\text{Ctrl-Z}_{(i,j)}$ 
\item $\hat{H}\hat{Z}\hat{H} = \hat{X}$
\end{enumerate}
Using the above circuit identities it is easy to see that:
\begin{align}
\hat{C}_n( \hat{Z}_{1} \otimes_{i=2}^{n} \hat{\mathbb{I}}_i) &=  ( \hat{X}_{1} \otimes \hat{Z}_{2} \otimes_{i=3}^{n} \hat{\mathbb{I}}_i)\hat{C}_n \, , \nonumber \\
\hat{C}_n (\otimes_{i=1}^{n-1} \hat{\mathbb{I}}_i \otimes \hat{Z}_{n}) &=  (\otimes_{i=1}^{n-2} \hat{\mathbb{I}}_i \otimes \hat{Z}_{n-1} \otimes \hat{X}_{n} )\hat{C}_n \, .
\end{align}

This relation can be extended to the case of a generic value of $p:1<p<n+1$ via a recursive application of the gate $\hat{C}_n$. Therefore, for $p' = n-p+2$ we have that
\begin{align}
&\hat{C}_n^{p'}( \hat{Z}_{1} \otimes_{i=2}^{n} \hat{\mathbb{I}}_i) =  ( \otimes_{i=1}^{n-p} \hat{\mathbb{I}}_i \otimes \hat{Z}_{n-p+1} \otimes \hat{X}_{n-p+2} \otimes_{i=n-p+3}^{n} \hat{\mathbb{I}}_i)\hat{C}_n^{p'} \, , \nonumber \\
&\hat{C}_n^{p'} (\otimes_{i=1}^{n-1} \hat{\mathbb{I}}_i \otimes \hat{Z}_{n}) = (\otimes_{i=1}^{p-2} \hat{\mathbb{I}}_i \otimes \hat{X}_{p-1} \otimes \hat{Z}_{p}  \otimes_{i=p+1}^{n} \hat{\mathbb{I}}_i )\hat{C}^{p'}_n \, .
\end{align}
Note that for $i \neq 1,n$ the resulting gate on the output state will still be an entangling gate, but it will have a more complicated form than the simple two-qubit gate one presented above. Since we are only interested in the universality proof we need not to discuss the most general case.

Using the procedure above one can steer a $\hat{Z}_{k}\otimes \hat{X}_{k+1}$ or $\hat{X}_{k} \otimes \hat{Z}_{k+1}$, for any $ 1\le k < n$, to an arbitrary position of the output state depending on the number $p$ of $\hat{C}_n$ operations in the circuit. The above results can be generalised by replacing the Pauli-$Z$ with any $Z$-rotation, $\hat{R}_{Z}(\theta)$, and hence one can obtain arbitrary rotations and nearest neighbour entangling gates. We show a particular example of implementing a nearest neighbour entangling gate with cluster state $|\text{OCS}_{2 \times 4}\rangle$ in Figure \ref{fig:cluster}. This completes the proof of Lemma \ref{lem:zrot}. 
\end{proof}
By using the lemmas proved above we can state the first universality theorem.
\begin{theorem}
\label{thm:univcs1}
The family of open-ended cluster states $|\text{inOCS}_{n \times m}\rangle$ is universal for quantum computation when used as a resource in MBQC with measurements limited to the $(X,Y)$-plane of the Bloch sphere.
\end{theorem}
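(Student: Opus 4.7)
My plan is to combine Lemma~\ref{lem:gateset}, which identifies $\{\hat{R}_Z(\theta), \hat{R}_X(\theta), \hat{R}_{ZX}(\theta)\}$ (with the entangling gate restricted to nearest neighbours) as a universal gate set, with the single-block primitives established by Lemma~\ref{lem:zrot}. The remaining work is essentially compositional: chaining the open-ended cluster states of Lemma~\ref{lem:zrot} so that arbitrary-depth circuits can be realised.

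Concretely, I would take the target $n$-qubit unitary $\hat{U}$ and decompose it via Lemma~\ref{lem:gateset} as $\hat{U} = g_L \cdots g_2 g_1$, with each $g_k$ drawn from the universal set. I would then take a resource $|\text{inOCS}_{n \times m}\rangle$ with $m = L(n+1)+1$, partitioning it conceptually into $L$ contiguous slabs of size $n \times (n+2)$ that share their boundary columns. On the $k$-th slab I would apply the measurement pattern dictated by Lemma~\ref{lem:zrot}: a single operational qubit, chosen according to which gate $g_k$ is and which logical qubits it acts on, is measured at the $(X,Y)$-plane angle $\theta_k$, while every other operational qubit in the slab is measured along the $X$-axis. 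Appealing to the input-substitution property noted before Definition~3~\cite{Danos2007}, the shared boundary column legitimately plays the dual role of output of slab $k{-}1$ and generic input of slab $k$, so the slab-wise actions of Lemma~\ref{lem:zrot} compose to realise $\hat{U}$ on the unmeasured final column.

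The main obstacle I expect to have to address is the bookkeeping of the mirror factor $\hat{C}_n^{n+1}$ implicit in Lemma~\ref{lem:zrot} when it identifies the action of a single slab as a rotation on qubit $n+1-i$. When many slabs are chained, these mirror factors must either compose to a known permutation of output-qubit labels that can be undone (for instance by taking $L$ to be even, or by appending a trivial identity slab of the right parity) or else be absorbed by relabelling the qubits in the decomposition of $\hat{U}$. Once this accounting is in place, together with the standard absorption of byproduct Paulis from non-positive branches into later measurement angles~\cite{Danos2006,Danos2007}, no new algebraic content is needed, and Theorem~\ref{thm:univcs1} follows directly from Lemmas~\ref{lem:gateset} and~\ref{lem:zrot}.
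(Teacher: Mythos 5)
Your proposal is correct and follows essentially the same route as the paper: Theorem~\ref{thm:univcs1} is obtained by combining Lemma~\ref{lem:gateset} with Lemma~\ref{lem:zrot}, concatenating $n\times(n+2)$ blocks (with the output column of one block serving as the generic input of the next, as permitted by the input-substitution property) so that each gate of the decomposition is realised by a single non-trivial $(X,Y)$-plane angle. The only difference is that you make explicit the compositional bookkeeping --- boundary columns, the mirror permutation from $\hat{C}_n^{n+1}$, and byproduct Pauli absorption --- which the paper's two-line proof leaves implicit; this is a faithful elaboration rather than a different argument.
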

\begin{proof}
Using Lemma \ref{lem:gateset} and Lemma \ref{lem:zrot} we conclude that any gate constructed using the gates from the universal set $\{ \hat{R}_{Z}(\theta), \hat{R}_{X}(\theta), \hat{R}_{ZX}(\theta)  \}$ can be implemented using $|\text{inOCS}\rangle$ as a resource state and $(X,Y)$-plane measurements. 
\end{proof}
  
\begin{corollary}
\label{thm:univcs}
The family of cluster states $|\text{CS}_{n \times m}\rangle$ is universal for quantum computation with the additional constraint that the measurement angles are chosen solely from the $(X,Y)$-plane.
\end{corollary}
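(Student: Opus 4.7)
The plan is to deduce the corollary directly from Theorem~\ref{thm:univcs1} by comparing the two cluster state definitions. The only structural difference between $|\text{CS}_{n\times m}\rangle$ and $|\text{OCS}_{n\times m}\rangle$ is the vertical $\text{Ctrl-Z}$ gates applied within the last column $j=m$; thus, writing
\[
\hat{V}_{\text{col}} := \prod_{i=1}^{n-1} \text{Ctrl-Z}_{(i,m),(i+1,m)} \, ,
\]
one has $|\text{CS}_{n\times m}\rangle = \hat{V}_{\text{col}} \, |\text{OCS}_{n\times m}\rangle$, and the same identity extends to the variants with generic input, giving $|\text{inCS}_{n\times m}\rangle = \hat{V}_{\text{col}} \, |\text{inOCS}_{n\times m}\rangle$.

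Next I would observe that $\hat{V}_{\text{col}}$ is supported entirely on the output column, whose qubits are never measured during the pattern, so it commutes with every $(X,Y)$-plane measurement projector and every byproduct correction used in the positive-branch analysis of Lemma~\ref{lem:zrot}. Consequently, for any fixed measurement pattern, the unitary implemented on $|\text{inCS}_{n\times m}\rangle$ equals $\hat{V}_{\text{col}}$ composed with the unitary implemented on $|\text{inOCS}_{n\times m}\rangle$. To realise an arbitrary target unitary $\hat{W}$ on the closed cluster state, I would invoke Theorem~\ref{thm:univcs1} to obtain an $(X,Y)$-plane measurement pattern on $|\text{inOCS}_{n\times m}\rangle$ that implements the augmented unitary $\hat{V}_{\text{col}} \, \hat{W}$; executing the \emph{same} pattern on $|\text{inCS}_{n\times m}\rangle$ then yields $\hat{V}_{\text{col}} \cdot \hat{V}_{\text{col}} \, \hat{W} = \hat{W}$, since the individual $\text{Ctrl-Z}$ factors of $\hat{V}_{\text{col}}$ are self-inverse and mutually commute.

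The main obstacle is not mathematical but bookkeeping: I must ensure that the grid is large enough for Theorem~\ref{thm:univcs1} to compile $\hat{V}_{\text{col}} \, \hat{W}$, which appends only $n-1$ nearest-neighbour entangling gates to $\hat{W}$ and thus costs at most a constant number of additional columns per logical gate. A secondary point is that the corollary is stated for plain $|\text{CS}_{n\times m}\rangle$ rather than $|\text{inCS}_{n\times m}\rangle$, but this is harmless: since the universal gate set of Lemma~\ref{lem:gateset} can synthesise any unitary, and the default input $|+\rangle^{\otimes n}$ of column $1$ serves as the starting state, any desired computation on the cluster state can be expressed as such a unitary applied to $|+\rangle^{\otimes n}$, and the reduction above carries over verbatim.
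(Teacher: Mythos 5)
Your proposal is correct and follows essentially the same route as the paper: both identify the sole difference between $|\text{CS}_{n\times m}\rangle$ and the open-ended state as the final column of $\text{Ctrl-Z}$ gates, note that these act only on the unmeasured output qubits, and invoke Theorem~\ref{thm:univcs1} to absorb them, before reducing the fixed-input case to the generic-input case. Your explicit pre-compensation step (compiling $\hat{V}_{\text{col}}\hat{W}$ and using $\hat{V}_{\text{col}}^2=\hat{\mathbb{I}}$) is simply a more concrete phrasing of the paper's concatenation argument via the state $|\text{CZ}_\text{inOCS}\rangle$, and it is sound.
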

\begin{proof}
In this section we will prove that universality of $|\text{inOCS}\rangle$ implies universality of $\CS$.  To show this, we note that a $|\text{inOCS}_{n \times m}\rangle$ and a $|\text{CS}_{n \times m}\rangle$ differ only by (possibly) the input state, and the $(n-1)$ $\text{Ctrl-Z}$ operators on the last column of the graph. 

Firstly, we see that universality of $|\text{inOCS}\rangle$ implies it can implement a circuit with $(n-1)$ $\text{Ctrl-Z}$ on a $n$-qubit state such that all the neighbouring qubits have $\text{Ctrl-Z}$ applied between them. Let us call such state $|{\text{CZ}_\text{inOCS}}\rangle$, which is given by $\prod_{i=1}^{n-1}\text{Ctrl-Z}_{i,i+1} |+\rangle^{\otimes n}$ and it is constructed using specific measurement angles on an open-ended cluster state $|\text{inOCS}_{n \times m}\rangle$. Then, any arbitrary unitary that can be constructed using the cluster state $|\text{inCS}_{n \times m}\rangle$ can be also constructed by concatenating the output of a cluster state $|\text{inOCS}_{n \times m}\rangle$ with the $|{\text{CZ}_\text{inOCS}}\rangle$ state.

Hence, using such construction, universality of $|\text{inOCS}\rangle$ (given by Theorem \ref{thm:univcs1}) implies universality of $|\text{inCS}\rangle$. However, one can think of any $\CS$ as the concatenation of two $|\text{inCS}\rangle$ states, the first with input state $|I\rangle = |+\rangle^{\otimes n}$ and the second with input state given by the output of the first. Then, universality of $|\text{inCS}\rangle$ immediately implies that the cluster state $\CS$ is universal with $(X,Y)$-plane measurements.
\end{proof}  

\emph{Acknowledgements} --- The authors acknowledge support from Singapore's Ministry of Education and National Research Foundation, and the Air Force Office of Scientific Research under AOARD grant FA2386-15-1-4082. This material is based on research funded in part by the Singapore National Research Foundation under NRF Award NRF-NRFF2013-01.

\bibliography{allrefs-Tom}
\bibliographystyle{unsrt}

\end{document}